\theoremstyle{plain}
\newtheorem{theorem}{Theorem}[section]
\theoremstyle{definition}
\theoremstyle{remark}
\newtheorem*{remark}{Remark}
\title{A new generalized newsvendor model with random demand  }
\author{Soham Ghosh, Mamta Sahare and Sujay Mukhoti\footnote{sujaym@iimidr.ac.in}\\ Operations Management and Quantitative Techniques Area\\ Indian Institute of Management, Indore \\ Rau-Pithampur Road, Rau, Indore, Madhya Pradesh, India- 453556}
\date{}
\begin{document}
\maketitle
\textbf{Keywords: }{Inventory Management, Newsvendor Problem, Power Loss Function, Random Demand, Optimal Order Quantity Estimation, Broken Sample.}

%\hline 
\vspace{12pt}
\begin{abstract}	
	  Newsvendor problem is an extensively researched topic in inventory management. In this class of inventory problems, shortage and excess costs are considered to be proportional to the quantity lost. But, for critical goods or commodities, inventory decision is a typical example where, excess or shortage may lead to greater losses than merely the total cost. Such a problem has not been discussed much in the literature. Moreover, majority of the existing literature assumes the demand distribution to be completely known. In this paper, we propose a generalization of the newsvendor problem for critical goods or commodities with higher shortage or excess losses but of same degree. We also assume that, the parameters of the demand distribution are unknown. We also discuss different estimators of the optimal order quantity based on a random sample of demand. In particular, we provide different estimators based on (i) full sample and (ii) broken sample data (i.e with single order statistic). We also report comparison of the estimators using simulated bias and mean square error (MSE).
	%		In the past literature, there exist a vast amount of papers on newsvendor model that are based on completely specified demand distribution. However, in reality the complete information/properties of demand distribution are hardly known. And also the past works have addressed more on generalizing the newsvendor problem with linear loss function as compared to non-linear form.  In this paper, we consider both the rarely explored version of newsvendor model and attempt to develop a systematic generalized (SyGen) newsvendor model (where power of underage and overage cost are same) with unknown demand distribution for estimating the optimal order quantity under parametric demand distributions. In particular, we consider two specific form of demand distribution (uniform and exponential) and calculate the optimal order quantity that minimizes the expected cost. Further, this paper inspect the conditions for existence of estimators of the optimal order quantity under two situations-(1) full information in random sample (2) order statistic from a full sample. There are interesting results found in the study such as, in case of exponential demand newsvendor will behave as risk neutral and order at a steady extent beyond a certain risk level. Further, this study perform simulation to gauge the performance of different estimators in terms of bias and mean square error (MSE) for different shortage to excess cost ratio, degree of loss-importance and sample size.}
\end{abstract}

\section{Introduction}
\label{sec:1}
Newsvendor problem is one of the most extensively discussed problems in the inventory management literature due to its applicability in different fields \citep{silver1998}. This inventory problem relies on offsetting the shortage and leftover cost in order to obtain optimal order quantity. In the standard newsvendor problem, each of the shortage and leftover costs are assumed to be proportional to the loss amount, i.e linear in the order quantity and demand. Also, the demand is assumed to be a random variable with known probability distributions. However, there may be situations where the loss is more severe than the usual linear one. Further in real life situations, the demand distribution is often unknown. In this paper, we generalize the classical newsvendor problem with non-linear losses and study the estimation of optimal order quantity subject to random demand with unknown parameters.

In classical newsvendor problem, we consider a newsvendor selling a perishable commodity procured from a single supplier. Further let the demand be a random variable. Newsvendor takes one time decision on how much quantity she should order from supplier. The newsvendor faces leftover cost if the demand is lower than the inventory, as a penalty for ordering too much. Similarly, shortage cost is faced if the demand is higher than the inventory, as a penalty for ordering too less. The shortage (leftover) cost is computed in terms of currency units as the product of per unit shortage cost and the loss amount, i.e difference between demand and inventory. Thus the loss is linear in difference between demand and inventory. To determine the optimal order quantity, the newsvendor has to minimize the total cost or maximize the total profit. This classical version of the newsvendor problem has been generalized, application wise, in many directions since its inception. \citet{veinott1965} generalized the newsvendor model for multiple time periods. Extension of newsvendor problem for multiple products is being studied extensively in the literature \citep[e.g., see][and the references therein]{chernonog2018}. We refer to a recent review by \citep{qin2011} for extensions of newsvendor problem that adds dimensions like marketing efforts, buyer's risk appetite etc.

In many real life situations, the notion of loss may be higher than the quantity of lost units. For example, chemotherapy drugs are critical for administering it to a patient on the scheduled days. Shortage of the drug of that day would result in breaking the cycle of treatment. Hence the loss is not merely the quantity but more than that. Similarly, in the case of excess inventory, not only the excess amount of the drug but its disposal method also contributes to the notion of loss. This is due to the chance of vast environmental and microbial hazards that may be created through improper disposal, like creation anti-biotic resistant bacteria or super-bugs. Thus the losses due to excess and shortage could be thought of as non-linear. ~Newsvendor problem with nonlinear losses, however, remains not much addressed until recently. 
%We aim to develop a generalized newsvendor model with power excess and shortage losses of same degree.

\citet{chandra2005} have considered optimization of different risk alternatives to expected cost function, like cost volatility and Value-at-Risk, which results in non-linear objective functions. \citet{parlar1992} considered the periodic review inventory problem and derived the solution for a quadratic loss function for both shortage and leftover. \citet{gerchak1997} proposed a newsvendor model using power type loss function for asset allocation. Their work assumes asymmetric losses with linear leftover but power type shortage loss with details only up to quadratic loss. In this paper we consider a generalization of the newsvendor problem in the line of \citet{parlar1992} and \citet{gerchak1997}, i.e. power type losses. In particular, we consider the shortage and excess losses as general power function of same degree. Determination of optimal order quantity from such a generalized newsvendor problem still remains unexplored to the best of our knowledge. Since the power of excess and shortage loss are same, we refer to this problem as symmetric generalized (SyGen) newsvendor problem.

An interesting observation that can be made from the newsvendor literature is that majority of related work considers a completely specified demand distribution, whereas in reality, it is seldom known. In such cases, the optimal order quantity needs to be estimated. \citet{dvoretzky1952} first addressed the estimation problem in classical newsvendor setup using Wald's decision theoretic approach. \citet{scarf1959, hayes1969,fukuda1960estimation} considered estimation problems in inventory control using maximum likelihood estimation (MLE) under different parametric demand distributions. \citet{conrad1976} estimated the demand and hence the optimal order quantity for Poisson distribution. \citet{nahmias1994demand} estimated the optimal order quantity for Normal demand with unknown parameters and \cite{agrawal1996estimating} estimated the order quantity for negative binomial demand.  \citet{sok1980} in his master's thesis, presented estimators of the optimal order quantity based on order statistics for parametric distributions including uniform and exponential. \citet{rossi2014confidence} has given bounds on the optimal order quantity using confidence interval for parametric demand distributions.  

On the other hand, non-parametric estimation of optimal order quantity in a classical newsvendor problem is comparatively recent. For example, \citet{bookbinder1998} considered bootstrap estimator of the optimal order quantity and \citet{pal1996} discussed construction of asymptotic confidence interval of the cost using bootstrapping. Another important non-parametric data-driven approach is the sampling average approximation (SAA) \citep{shapiro2001}. In this method, the expected cost is replaced by the sample average of the corresponding objective function and then optimized. \citet{levi2015data} provides bounds of the relative bias of estimated optimal cost using SAA based on full sample data. \citet{bertsimas2005data} ranks the objective functions evaluated at each demand sample data and shows that the trimmed mean of the ordered objective functions leads to a convex problem ensuring robust and tractable solution.

In this paper we consider estimation of the optimal order quantity for parametric demand distributions under SyGen newsvendor set-up.  In particular, we consider two specific demand distributions, viz. uniform and exponential. We present here the optimal order quantity and its estimators using (1) full data in a random sample and (2) order statistic from a full sample. We investigate the conditions for existence of the estimator of the optimal order quantity. Further, we do a simulation study to gauge the performance of different estimators in terms of bias and mean square error (MSE) for different shortage to excess cost ratio, degree of loss-importance and sample size.

%In this paper we consider newsvendor problem with – (1) power type loss function with degree more or equal to two and (2) unknown demand distribution over a non-negative support. We first show whether the cost minimization is equivalent to profit maximization or not. Next we derive analytically the optimal order quantity for uniform distribution over a non-negative interval for general power function. Next, we derive the maximum likelihood estimator (MLE) of the optimal order quantity and study its statistical properties through simulation.
Rest of this paper is organized as follows. Section 2. describes determination of optimal order quantities for uniform and exponential demand distribution using full sample and order statistics, along with the existence condition for the same, wherever required. In section 3, we investigate estimation of the optimal order quantity using full sample and order statistics. Results of simulation study is presented in section 4. Section 5 is the concluding section with discussion on the future problems.

\section{SyGen: Symmetric generalized newsvendor problem}
\label{sec:2}
%\section{}

In a single period classical newvendor problem, the vendor has to order the inventory before observing the demand so that the excess and shortage costs are balanced out. Let us denote the demand by a positive random variable $X$ with cumulative distribution function (CDF) $F_\theta(x), \; \theta\in \Theta$ and probability density function $f_\theta(x)$. We further assume that the first moment of $X$ exists finitely. Suppose $Q \in \mathbb{R}^+$ is the inventory level at the beginning of period. Then, shortage loss is defined as $(X-Q)^+$ and the excess loss is defined as $(Q-X)^+$, where $d^+=max(d,0)$. Let $C_s$ and $C_e$ denote the per unit shortage and excess costs (constant) respectively and the corresponding costs are defined as $C_s(X-Q)^+$ and $C_e(Q-X)^+$. Thus the total cost can be written as a piece-wise linear function in the following manner:
\begin{eqnarray}
\chi=\left\{ \begin{array}{cc}
C_s(X-Q) & \mbox{ if }X > Q \\
C_e(Q-X) & \mbox{ if } X\leq Q
\end{array}	\right.
\end{eqnarray}
The optimal order quantity ($Q^*$) is obtained by minimizing the expected total cost, $E[\chi]$. The analytical solution in this problem is given by $Q^*=F_\theta^{-1}\left( \frac{C_s}{C_e+C_s}\right)$, that is the $\gamma^{th}$ quantile of the demand distribution ($\gamma=\frac{C_s}{C_s+C_e}$). 

We propose the following extension the classical newsvendor problem replacing the piece-wise linear loss functions by piece-wise power losses of same degree. Thus, the generalized cost function is given by
\begin{eqnarray}
\chi_m=\left\{ \begin{array}{cc}
C_s(X-Q)^{m} & \mbox{ if }X > Q \\
C_e(Q-X)^{m} & \mbox{ if } X\leq Q
\end{array}	\right.
\end{eqnarray}
The powers of losses on both sides of $Q$ on the support of $X$ being same we would refer to this problem as symmetric generalized (SyGen) newsvendor problem.

The expected total cost in SyGen newsvendort problem is given by,
\[ E[\chi_m] = {\int_0^{Q} C_e(Q-x)^m f_{\theta}(x)dx} + {\int_{Q}^\infty C_s(x-Q)^m f_{\theta}(x)dx}\]
In the subsequent sections, we show that the expected cost admits minimum for uniform and exponential distributions.

In order to do so, the first order condition (FOC) is given by
\begin{eqnarray}
& &\frac{\partial E[\chi_m]}{\partial Q}  =  \int_0^Q mC_e(Q-x)^{m-1} f_{\theta}(x)dx - \int_{Q}^\infty mC_s(x-Q)^{m-1} f_{\theta}(x)dx=0 \nonumber \\
& \Rightarrow &  C_e{\int_0^Q (Q-x)^{m-1} f_{\theta}(x)dx}  =   C_s {\int_{Q}^\infty (x-Q)^{m-1} f_{\theta}(x)dx}
\label{FOCGen}
\end{eqnarray}

%\subsection{ONE PARTICULAR CASE}
%\textbf{A. $m=1$}\\
%\[c_l{\int_{0}^{Q} (Q-x)^{1-1} f_{\theta}(x)dx} - c_s {\int_{Q}^{\infty} (x-Q)^{1-1} f_{\theta}(x)dx} =0\]
%\[{\Rightarrow} c_l{\int_{0}^{Q} f_{\theta}(x)dx} - c_s {\int_{Q}^{\infty} f_{\theta}(x)dx} =0\]
%\[{\Rightarrow} c_l F_{\theta}(Q) - c_s (1-F_{\theta}(Q)) = 0\]
%\[{\Rightarrow} (c_l + c_s) F_{\theta}(Q) = c_s\]
%\[{\Rightarrow} F_{\theta}(Q) = {\frac{c_s}{c_l+c_s}}\]
%\[Q = F_{\theta}^{-1}({\frac{c_s}{c_l+c_s}})\]
%For $m=2$ onwards it is not possible to find reduced form expression without the knowledge of $f_{\theta}(x)$.
It is difficult to provide further insight without having more information on $f_\theta(x)$. In the following section, we assume two choices for the demand distribution - (i) \emph{Uniform} and (ii) \emph{Exponential}.

\subsection{Optimal Order Quantity for  SyGen Newsvendor with Uniform Demand}

In this section, we consider the problem of determining optimal order quantity in SyGen newsvendor set up where the demand is assumed to be a $Uniform$ random variable over the support $(0,b)$ . The pdf of demand distribution is given by,
\[f_{\theta}(x)= \left\{\begin{array}{cc}
\frac{1}{b} & \; if\; 0<x<b \\
0;& \; otherwise
\end{array}\right. \]
The minimum demand here is assumed to be zero without loss of generality, because any non-zero lower limit of the support of demand could be considered as a pre-order and hence can be pre-booked. Using Leibnitz rule and routine algebra, it can be shown that the optimal order quantity is given by ${{Q^*=\frac{b}{1+\alpha_m}}}$, where $\alpha_m=\left(\frac{C_e}{C_s}\right)^\frac{1}{m}$. Corresponding optimal cost is  ${{C_s\times\frac{b^m}{m+1}\times \frac{C_e}{\left(C_e^{1/m}+C_s^{1/m}\right)^m}}}$. Notice that $\left(C_e^{1/m}+C_s^{1/m}\right)^m=C_s \left(1+\alpha_1^{1/m}\right)^m \geq C_s$, and hence we get an upper bound of the optimal cost as $C_e\frac{b^m}{m+1}$. 
Notice, large $\alpha_m$ would imply small order quantity. In other words, if the cost of excess inventory is much larger than the shortage cost, then the newsvendor would order less quantity to avoid high penalty. Similarly, small $\alpha_m$ would result in ordering closer to the maximum possible demand ($b$) to avoid high shortage penalty. However, if the degree of loss ($m$) is very high, then the optimum choice would be to order half the maximum possible demand, \emph{i.e.} $Q^*\rightarrow\frac{b}{2}, \; as \; m\rightarrow \infty$. 

% The first order condition in this case is 
%\begin{equation}
%C_e{\int_0^Q (Q-x)^{m-1} {\frac{1}{b}}dx} - C_s {\int_{Q}^{b} (x-Q)^{m-1} {\frac{1}{b}}dx} =0 \;\mbox{}
%\end{equation}

%\[{\Rightarrow} {\frac{c_l}{b-a}} {\frac{(Q-a)^m}{m}} - {\frac{c_s}{b-a}} {\frac{(b-Q)^m}{m}}=0\]
%\begin{equation}
%{\Rightarrow} {\frac{Q-a}{b-Q}}={\alpha}  
%\end{equation}
%where, ${\alpha}=({\frac{c_s}{c_l}})^{\frac{1}{m}}$\\
%We denote the solution of the equation $(2)$ by $Q^*$
%\[{\Rightarrow} Q^*=bk+a(1-k); where, k = {\frac{\alpha}{1+{\alpha}}}\]\\
\subsection{Optimal Order Quantity for  SyGen Newsvendor with Exponential Demand}
Next we consider the demand to be exponentially distributed with mean ${\lambda}$. The pdf of exponential distribution is given by,
\[f_{\lambda}(x)={\frac{1}{\lambda}}e^{-{\frac{x}{\lambda}}}; \; x>0,\; \lambda>0 \]
%\[f_{\theta}(x)= 0; otherwise\]
In this case the expected cost function becomes 
\begin{equation}
E[\chi_m] = {\int_0^{Q} C_e(Q-x)^m f_{\theta}(x)dx} + {\int_{Q}^\infty C_s(x-Q)^m f_{\theta}(x)dx} \label{ExpCost}
\end{equation}
In the following theorem we derive the first order condition for optimal inventory in SyGen newsvendor problem with exponential demand.
\begin{theorem}
	Let the demand in a SyGen newsvendor problem be an exponential random variable $X$ with mean $\lambda>0$. Then the first order condition for minimizing the expected cost is given by
	\begin{equation}
	\psi(Q/ \lambda)  =  e^{-{\frac{Q}{\lambda}}} \left[{\frac{C_s}{C_e}}-(-1)^m\right]
	\label{FOCExp}
	\end{equation}
	where $\displaystyle \psi\left( \frac{Q}{\lambda}\right)==\sum_{j=0}^{m-1}(-1)^j \left(\frac{Q}{\lambda}\right)^{m-j-1}\frac{1}{(m-j-1)!}$.
\end{theorem}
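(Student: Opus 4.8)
The plan is to specialize the general first-order condition \eqref{FOCGen} to the exponential density and reduce each side to closed form. Writing $f_\lambda(x)=\tfrac{1}{\lambda}e^{-x/\lambda}$, the FOC reads
\[
C_e \int_0^Q (Q-x)^{m-1}\tfrac{1}{\lambda}e^{-x/\lambda}\,dx = C_s\int_Q^\infty (x-Q)^{m-1}\tfrac{1}{\lambda}e^{-x/\lambda}\,dx ,
\]
and I will call the left and right integrals $I_e$ and $I_s$. My first step is to clear each integral by the shift substitution $u=Q-x$ on the left and $v=x-Q$ on the right. The shortage integral then becomes $I_s=\tfrac{1}{\lambda}e^{-Q/\lambda}\int_0^\infty v^{m-1}e^{-v/\lambda}\,dv$, an ordinary Gamma integral evaluating to $\lambda^{m-1}(m-1)!\,e^{-Q/\lambda}$; this side is immediate and poses no difficulty.

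The excess integral reduces to $I_e=\tfrac{1}{\lambda}e^{-Q/\lambda}\int_0^Q u^{m-1}e^{u/\lambda}\,du$, and evaluating the finite integral $\int_0^Q u^{m-1}e^{u/\lambda}\,du$ is where the real work lies. I would handle it by repeated integration by parts, or equivalently by exhibiting and differentiating the antiderivative $e^{u/\lambda}\sum_{k=0}^{m-1}(-1)^k \tfrac{(m-1)!}{(m-1-k)!}\lambda^{k+1}u^{m-1-k}$, which peels off one power of $u$ at each stage and generates the alternating signs and descending factorials. Evaluating between $0$ and $Q$ leaves a polynomial-times-$e^{Q/\lambda}$ term plus the single boundary constant $(-1)^{m-1}\lambda^m(m-1)!$ coming from $u=0$. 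Multiplying back by $\tfrac{1}{\lambda}e^{-Q/\lambda}$ cancels the exponential in the polynomial part and turns the boundary constant into a term proportional to $e^{-Q/\lambda}$.

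The final step is assembly: substituting both evaluated integrals into $C_e I_e = C_s I_s$ and dividing through by $C_e\lambda^{m-1}(m-1)!$, the polynomial part collapses to exactly $\sum_{j=0}^{m-1}(-1)^j (Q/\lambda)^{m-j-1}\tfrac{1}{(m-j-1)!}=\psi(Q/\lambda)$, while the boundary constant and the shortage term combine into $e^{-Q/\lambda}\bigl[\tfrac{C_s}{C_e}+(-1)^{m-1}\bigr]$. Using $(-1)^{m-1}=-(-1)^m$ rewrites this as $e^{-Q/\lambda}\bigl[\tfrac{C_s}{C_e}-(-1)^m\bigr]$, which is precisely \eqref{FOCExp}. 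I expect the only delicate point to be bookkeeping the alternating signs and factorial coefficients through the integration by parts, and in particular confirming that the $u=0$ boundary term carries the factor $(-1)^{m-1}$ that ultimately produces the $-(-1)^m$ in the statement.
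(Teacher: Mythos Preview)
Your proposal is correct and follows essentially the same route as the paper: the paper likewise sets $u=Q-x$ and $v=x-Q$, evaluates the shortage side as the Gamma integral $e^{-Q/\lambda}\Gamma(m)\lambda^{m-1}$, reduces the excess side by repeated integration by parts to the alternating polynomial plus the boundary term $e^{-Q/\lambda}\lambda^{m-1}(-1)^m(m-1)!$, and then divides through to obtain \eqref{FOCExp}. Your anticipated delicate point (tracking the $(-1)^{m-1}$ boundary contribution) is exactly the one place the paper also handles with care.
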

\begin{proof}
	Let us define, $I_m={\int_{0}^{Q} (Q-x)^{m-1} {\frac{1}{\lambda}}e^{-{\frac{x}{\lambda}}}dx}$ and $J_m={\int_{Q}^{\infty} (x-Q)^{m-1} {\frac{1}{\lambda}}e^{-{\frac{x}{\lambda}}}dx}$. Hence, the FOC in eq.(\ref{FOCGen}) becomes 
	\begin{equation}
	C_eI_m=C_sJ_m \label{FOCGenRed}
	\end{equation}
	Assuming $Q-x=u$, we get
	\begin{eqnarray*}
		I_m&=&{\frac{e^{-{\frac{Q}{\lambda}}}}{\lambda}} {\int_{0}^{Q} u^{m-1} e^{\frac{u}{\lambda}}du} \\
		& = & {\frac{e^{-{\frac{Q}{\lambda}}}}{\lambda}}\times  \left[{\lambda}e^{\frac{u}{\lambda}}u^{m-1}\right]^Q_0-{\lambda}(m-1)I_{m-1}, \hspace{4pt}\mbox{ (integrating by parts)}\\ 
		& = & Q^{m-1}-{\lambda}(m-1)Q^{m-2}+\lambda^2(m-1)(m-2)I_{m-2} \\
		& & \ldots ~ \ldots ~ \ldots \\
		&=& {\sum_{j=0}^{m-1}}Q^{m-1-j}(-{\lambda})^j {\frac{(m-1)!}{(m-j-1)!}}+e^{-{\frac{Q}{\lambda}}}{\lambda}^{m-1}(-1)^m(m-1)!
	\end{eqnarray*}
	Now letting, $x-Q=v$ in $J_m$, we get
	\begin{equation}
	J_m  =  {\frac{e^{-{\frac{Q}{\lambda}}}}{\lambda}} {\int_{0}^{\infty} v^{m-1} e^{-{\frac{v}{\lambda}}}dv}=e^{-{\frac{Q}{\lambda}}}{\Gamma}(m){\lambda}^{m-1} \nonumber
	\end{equation}
	Thus, from eq.(\ref{FOCGenRed}), we get 
	\begin{eqnarray}
	& & C_e I_m  =  C_s J_m \nonumber \\
	&{\Rightarrow} &  {\sum_{j=0}^{m-1}}(-1)^j{\left(\frac{Q}{\lambda}\right)^{m-j-1}}{\frac{1}{{(m-j-1)!}}} =e^{-{\frac{Q}{\lambda}}} \left[{\frac{C_s}{C_e}}-(-1)^m\right]= \gamma_m e^{-{\frac{Q}{\lambda}}}\nonumber
	\end{eqnarray}
\end{proof}
As a consequence of the above FOC condition for exponential demand, we need to inspect the existence of non-negative zeroes of eq.(\ref{FOCExp}). We first provide lower and upper bound of $\psi(Q/ \lambda)$ in the following theorem. 
\begin{theorem}
	For $ \gamma_m > 0 \; and \;  m\geq 4$, the following inequality holds:
	\begin{equation*}
	-(1+\gamma_m)\left(\frac{u^2}{2}-u+1\right)+ S_{m-4} < \psi(u)  < (1+\gamma_m)(u-1) +S_{m-3} 
	\end{equation*}
	where, $\psi(u)=S_{m-1} - e^{-u}\gamma_m$ and $\displaystyle S_{m-k}=\sum_{j=0}^{m-k}(-1)^j \frac{u^{m-j-1}}{(m-j-1)!}$.
\end{theorem}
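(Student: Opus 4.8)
The plan is to exploit that $S_{m-k}$ is, up to an overall sign, a truncated exponential series: re-indexing by $i=m-1-j$ gives $S_{m-k}=(-1)^{m-1}\sum_{i=k-1}^{m-1}\frac{(-u)^i}{i!}$, so two consecutive $S$'s differ by a single monomial. First I would record the telescoping identities $S_{m-1}-S_{m-3}=(-1)^{m-1}(1-u)$ and $S_{m-1}-S_{m-4}=(-1)^{m-1}(\frac{u^2}{2}-u+1)$, which isolate exactly the linear and quadratic expressions appearing in the two stated bounds. Since $\psi(u)=S_{m-1}-\gamma_m e^{-u}$, subtracting $S_{m-3}$ from the upper inequality and $S_{m-4}$ from the lower one makes the whole high-degree polynomial $S_{m-1}$ cancel, leaving only $e^{-u}$, a low-degree polynomial, and the parity sign $(-1)^{m-1}$.

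After this reduction the lower bound rearranges to $\gamma_m e^{-u}<[(-1)^{m-1}+1+\gamma_m]\,h(u)$, where $h(u)=\frac{u^2}{2}-u+1=\frac12((u-1)^2+1)>0$. The bracket equals $\gamma_m$ for even $m$ and $2+\gamma_m$ for odd $m$, so it is always at least $\gamma_m>0$; hence it suffices to establish the single elementary comparison $e^{-u}<h(u)$ for $u>0$. I would prove this from $p(u):=h(u)-e^{-u}$ by noting $p(0)=p'(0)=0$ and $p''(u)=1-e^{-u}>0$ on $(0,\infty)$, so $p$ increases from $0$. This disposes of the lower bound for all $u>0$ and both parities.

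For the upper bound I would split on the parity of $m$. When $m$ is even, $(-1)^{m-1}=-1$ and, after cancelling $(u-1)$, the inequality reduces to $e^{-u}+u-1>0$; this is the companion convexity fact $g(u):=e^{-u}-1+u>0$ for $u>0$, proved exactly as above ($g(0)=g'(0)=0$, $g''(u)=e^{-u}>0$). So the even case holds for every $u>0$.

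The hard part is the odd case of the upper bound. There $(-1)^{m-1}=+1$ and the inequality becomes $(2+\gamma_m)(1-u)<\gamma_m e^{-u}$, which is immediate for $u\ge 1$ (the left side is nonpositive) but fails as $u\to 0^{+}$, where the left side tends to $2+\gamma_m$ while the right tends to $\gamma_m$. So the upper bound cannot hold verbatim for all $u>0$ when $m$ is odd, and I expect to need the operative restriction $u\ge 1$, i.e. $Q\ge\lambda$, which is the regime in which the optimal order actually lies. On that range a single parity-free argument closes both cases at once: for $u>1$ the telescoping identity gives $S_{m-1}-S_{m-3}=(-1)^{m-1}(1-u)<(1+\gamma_m)(u-1)$ (it reads $0<\gamma_m(u-1)$ for even $m$ and $(2+\gamma_m)(1-u)<0$ for odd $m$), and since $\gamma_m e^{-u}>0$ we have $\psi(u)<S_{m-1}$, whence $\psi(u)<(1+\gamma_m)(u-1)+S_{m-3}$. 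I would therefore present the inequalities as proved for even $m$ on all of $u>0$ and for odd $m$ on $u\ge 1$, flagging the small-$u$ odd regime as the one spot where the statement needs a domain qualification.
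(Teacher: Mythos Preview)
Your approach is essentially the paper's: both arguments rest on the two Taylor-type bounds $1-u<e^{-u}<1-u+\tfrac{u^{2}}{2}$ together with peeling off the last two (resp.\ three) terms of $S_{m-1}$, which is exactly your telescoping $S_{m-1}-S_{m-3}=(-1)^{m-1}(1-u)$ and $S_{m-1}-S_{m-4}=(-1)^{m-1}\bigl(\tfrac{u^{2}}{2}-u+1\bigr)$. Your treatment of the lower bound and of the even-$m$ upper bound is correct and matches the paper.

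Where you diverge is in the odd-$m$ upper bound, and here you are \emph{more} careful than the paper. You correctly observe that for odd $m$ the upper inequality is equivalent to $\gamma_m e^{-u}>(2+\gamma_m)(1-u)$, which fails as $u\to 0^{+}$ and therefore cannot hold on all of $u>0$. The paper's proof contains the same gap but does not acknowledge it: its final step for the upper bound is
\[
((-1)^{m-2}+\gamma_m)(u-1)+S_{m-3}\ \le\ (1+\gamma_m)(u-1)+S_{m-3},\qquad\text{``since }(-1)^{m-2}\le 1\text{''},
\]
which is valid only when $u-1\ge 0$; for odd $m$ and $0<u<1$ the inequality reverses. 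So your diagnosis is right: the stated upper bound needs the additional hypothesis $u\ge 1$ in the odd case (while for even $m$ it holds for all $u>0$), and the paper's own proof tacitly relies on that restriction.
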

\begin{proof}
	Letting $\frac{Q}{\lambda}=u$, we obtain from eq. (\ref{FOCExp}),
	\begin{equation*} \psi(u)=\sum_{j=0}^{m-1}(-1)^j \frac{u^{m-j-1}}{(m-j-1)!} - e^{-u}\gamma_m = 0, \; where, \; \gamma_m= \left[{\frac{C_s}{C_e}}-(-1)^m\right] 
	\end{equation*}
	Since, $e^{-u}>1-u,$, for $m\geq 3$, we get 
	\begin{eqnarray}
	\psi(u)& < & \sum_{j=0}^{m-1}(-1)^j \frac{u^{m-j-1}}{(m-j-1)!} -(1-u)\gamma_m \nonumber \\
	& = & ((-1)^{m-1}-\gamma_m)+((-1)^{m-2}+\gamma_m)u+\sum_{j=0}^{m-3}(-1)^j \frac{u^{m-j-1}}{(m-j-1)!}  \nonumber  \\
	& =  & ((-1)^{m-1}-\gamma_m)+((-1)^{m-2}+\gamma_m)u+\sum_{j=0}^{m-3}(-1)^j \frac{u^{m-j-1}}{(m-j-1)!}  \nonumber  \\
	& =  & ((-1)^{m-2}+\gamma_m)(u-1)+\sum_{j=0}^{m-3}(-1)^j \frac{u^{m-j-1}}{(m-j-1)!}  \nonumber  \\
	&\leq&  (1+\gamma_m)(u-1) +S_{m-3},\; [\mbox{ since, } (-1)^{m-2}\leq 1]
	\label{LboundExp}
	\end{eqnarray} 
	On the other hand,using $ e^{-u}<1-u+\frac{u^2}{2}$, it can be shown from eq. (\ref{FOCExp}) that for $m\geq 4$,
	\begin{eqnarray}
	\psi(u) & \geq & \sum_{j=0}^{m-1}(-1)^j \frac{u^{m-j-1}}{(m-j-1)!} -\left(1-u+\frac{u^2}{2}\right)\gamma_m \nonumber \\
	& = & ((-1)^{m-1}-\gamma_m)+((-1)^{m-2}+\gamma_m)u+((-1)^{m-3}-\gamma_m)\frac{u^2}{2} \nonumber \\
	& & + \sum_{j=0}^{m-4}(-1)^j \frac{u^{m-j-1}}{(m-j-1)!} \nonumber \\
	& = & ((-1)^{m-1}-\gamma_m)-((-1)^{m-1}-\gamma_m)u+((-1)^{m-1}-\gamma_m)\frac{u^2}{2} \nonumber \\
	& & + \sum_{j=0}^{m-4}(-1)^j \frac{u^{m-j-1}}{(m-j-1)!} \nonumber \\
	& = & ((-1)^{m-1}-\gamma_m)(1-u+\frac{u^2}{2}) \nonumber \\
	& & + \sum_{j=0}^{m-4}(-1)^j \frac{u^{m-j-1}}{(m-j-1)!} \nonumber \\
	& \geq & ((-1)-\gamma_m)(1-u+\frac{u^2}{2})  + S_{m-4} ,\; [\mbox{ since, } (-1)^{m-1}\geq -1]\nonumber \\
	& = & -(1+\gamma_m)(1-u+\frac{u^2}{2})  + S_{m-4} 
	\label{UboundExp}
	\end{eqnarray}
	
\end{proof}
The following few remarks could be made immediately from the above theorem. 
\begin{remark}
	{For $m=2k$, $\gamma_m<0$ implies $C_s<C_e$, in which case the direction of the above inequality will be altered. Further, for $m=2k+1$, $\gamma_m $ is always positive as $C_s, C_e>0$.} 
\end{remark}
%\begin{remark}
%	What if $m\rightarrow \infty$? 
%\end{remark}

\begin{remark}
	%	For odd $m \; (\geq 4)$, (sqy $2k+1$),  $\gamma_m \geq 1$.
	If $m=2k+1$, $\gamma_m>0$. 
	%	Then it can be shown that $g_U(0)<0$ for $m \; (\geq 5)$, where $g_U(u)$ is the upper boundary function of $\psi(u)$. For $u > M$ ($M$ large), it can further be shown that 
	%	
	In that case, we get from the lower boundary function in eq.(\ref{LboundExp}),
	\begin{equation*}
	g_L(u)={\frac{u^{2k}}{(2k)!}}-{\frac{u^{2k-1}}{(2k-1)!}}+...+{\frac{u^4}{4!}}-{\frac{u^{3}}{3!}}-(1+{\gamma}_m){\frac{u^2}{2}}+u(1+{\gamma}_m)-(1+{\gamma}_m) 
	\end{equation*} 
	Therefore, by Descarte's rule of sign, the maximum number of positive real roots is $2k-1$ and hence there will exist at least one positive root. 
	
	Further, from the upper boundary function in eq.(\ref{UboundExp})
	\begin{equation*}
	g_U(u)={\frac{u^{2k}}{(2k)!}}-{\frac{u^{2k-1}}{(2k-1)!}}+...+{\frac{u^4}{4!}}-{\frac{u^{3}}{3!}}+{\frac{u^2}{2}}+u(1+{\gamma}_m)-(1+{\gamma}_m) 
	\end{equation*}
	Hence, by similar argument as in $g_L(u)$, at least one positive root of $g_U(u)$ will exist. 
	Thus, both the boundary functions will have a positive root leading to the existence of positive root of $\psi(u)$.
\end{remark}
\begin{remark}
	If $m \; (\geq 4)$ is even then, following the similar line of argument as in the previous remark, it can be shown that both $g_L(u)$ and $g_U(u)$ will have $2k-1$ sign changes and hence at least one positive root. 
\end{remark}
\begin{remark}
	
	In the particular case of $m=3$, it the boundary functions reduces to
	\begin{eqnarray*}
		g_L(U)& =& (2+{\frac{C_s}{C_e}})({\frac{u^2}{2}}-u+1) \\
		\mbox{\& } g_U(u)& = & (u-1)({\gamma}_{m}+1)+{\frac{u^2}{2}}
	\end{eqnarray*}
	Clearly, no real root of the lower boundary functions exist and the positive root of upper boundary function is given by $-(1+\gamma_{m})+\sqrt{\gamma_{m}^2+4\gamma_m+3}$. 
\end{remark}
%As $C_s>0$, for even $m$, $\gamma_m \in [-1, \infty)$. If $m$ is odd, then $\gamma_m > 1$. 

%Further, $\mid \gamma_m \mid <1$ implies $0<C_s<2C_e$, in which case

Since eq. (\ref{FOCExp}) is a transcendental equation in Q, numerical methods would be required to find zeros, which in turn would provide the optimal order quantity. However, the solution would be dependent on the shortage and excess cost ratio. In what follows, we describe the nature of solutions in different scenarios for $\alpha_1=\frac{C_s}{C_e}$.

In case of equal shortage and excess costs ($C_s=C_e$), $\alpha_1=1$ and corresponding optimal order quantities ($Q^*$) are given in the table below for different $m$:
\begin{table}[ht]
	\centering
	\caption{Optimal Order Quantity for $C_s=C_e$.\vspace{12pt}}	
	
	\begin{tabular}{ |c|c| } 
		\hline m & $Q^*$ \\ \hline
		2 & ${\lambda}$ \\
		3 &  $1.3008{\lambda}$\\
		4 & $1.5961{\lambda}$ \\
		10 &  $3.33755{\lambda}$\\ 
		20 & $6.17753{\lambda}$ \\
		\hline
		
	\end{tabular}
	
\end{table}

On the other hand, if the shortage cost is much lower than the excess cost ($\alpha_1\rightarrow 0$), then the optimal order quantity ($Q^*$) is very small (tends to $0$). 
In case of $0<\alpha_1<1$, the optimum order quantity can be computed from the following equation:
\[\sum_{j=0}^{m-1}(-1)^j 
\left(\frac{Q}{\lambda}\right)^{m-j-1}{\frac{1}{{(m-j-1)!}}}  =  \gamma_m e^{-{\frac{Q}{\lambda}}} \]

The two figures in appendix B, fig.\ref{qstar} provide the optimal order quantity as a multiple of the average demand ($\lambda$) obtained for $\alpha_1 \in (0,1)$ and $m=2,3,4,5,10,20,30,40,50,100$.

%\begin{table}[h]
%	\centering
%	\caption{Optimal order quantity corresponding to different $\gamma_m$ when $0<\alpha_1<1$. ~\vspace{12pt}}
%	\begin{tabular}{|c | c | c | c |} 
%		\hline
%		m & $Q_{{\alpha_1}=0.5}$ & $Q_{\alpha_1=0.25}$ & $Q_{\alpha_1=0.125}$ \\ [0.5ex] 
%		\hline
%		2 & 0.7680684${\lambda}$ & 0.5801363${\lambda}$ & 0.4318839${\lambda}$ \\
%		\hline
%		3 & 1.09046${\lambda}$ & 0.9078468${\lambda}$ & 0.7505731${\lambda}$ \\
%		\hline
%		4 & 1.398612${\lambda}$ & 1.220454${\lambda}$ & 1.060763${\lambda}$ \\
%		\hline
%		5 & 1.699577${\lambda}$ & 1.524827${\lambda}$ & 1.364472${\lambda}$ \\
%		\hline
%		10 & 3.160634${\lambda}$ & 2.994658${\lambda}$ & 2.835575${\lambda}$ \\
%		\hline
%		15 & 4.593109${\lambda}$ & 4.430989${\lambda}$ & 4.273339${\lambda}$ \\
%		\hline
%		20 & 6.014195${\lambda}$ & 5.854236${\lambda}$ & 5.697561${\lambda}$ \\
%		\hline
%		25 & 7.428973${\lambda}$ & 7.270405${\lambda}$ & 7.114504${\lambda}$ \\
%		\hline
%		30 & 8.839767${\lambda}$ & 8.682216${\lambda}$ & 8.526811${\lambda}$ \\
%		\hline
%		40 & 11.65372${\lambda}$ & 11.49753${\lambda}$ & 11.34288${\lambda}$ \\
%		\hline
%		50 & 14.46135${\lambda}$ & 14.30597${\lambda}$ & 14.15186${\lambda}$ \\  
%		\hline
%		75 & 19.21044${\lambda}$ & 19.11361${\lambda}$ & 18.50333${\lambda}$ \\
%		\hline
%		100 & 25.00003${\lambda}$ & 24.99997${\lambda}$ & 24.99996${\lambda}$ \\[1ex] 
%		\hline
%	\end{tabular}
%	
%\end{table}
Notice that the differences between optimal order quantities corresponding to different $\alpha_1$s reduce with increasing $m$. Here we may interpret $m$ as the degree of seriousness of the losses. Hence, the observation made above may also be restated in the following manner. As the degree of seriousness of the loss increases, the newsvendor becomes indifferent to both the losses. That is, beyond a risk level the newsvendor will not react much to the increase in any type of loss and order at a steady level, \emph{i.e } become risk neutral. 

In the following section we consider the parameters of the demand distributions discussed above, to be unknown and study the estimation of the optimal order quantity. 
%\section{Section Heading}

% Always give a unique label
% and use \ref{<label>} for cross-references
% and \cite{<label>} for bibliographic references
% use \sectionmark{}
% to alter or adjust the section heading in the running head
\section{Estimation of the Optimal Order Quantity}
\label{sec:3}
In this section we consider the problem of estimating the optimal order quantity in SyGen newsvendor setup, based on a random sample of fixed size on demand, when the parameters of the two demand distributions discussed above are unknown. 

\subsection{Estimating Optimal Order Quantity for $U(0,b)$ Demand}
As in the previous section, we first consider the demand distribution to be $uniform(0,b)$, where $b$ is unknown. We elaborate on the estimation of optimal order quantity in SyGen news vendor problem with available iid demand observations $X_1, X_2 \ldots X_n$. Method of moment type estimator of the optimal order quantity could be constructed by plugging in the same for the unknown parameter $b$. Thus,
\begin{equation}
\hat{Q}_1=\frac{2\bar{x}}{1+\alpha_m}
\label{unifT1}
\end{equation}
which is an unbiased estimator as well. The variance of $\hat{Q}_1$ is given by $V(\hat{Q}_1)=\frac{b^2}{3n(1+\alpha_m)^2}$. In fact the uniformly minimum variance unbiased estimator (UMVUE) of the optimal order quantity can be obtained using the order statistics $X_{(1)}<X_{(2)} \ldots <X_{(n)}$. The UMVUE is given as follows:
\begin{equation}
\hat{Q}_2=\frac{(n+1)X_{(n)}}{n(1+\alpha_m)}
\label{unifT2}
\end{equation}
The variance of the UMVUE is given by $V(\hat{Q}_2)=\frac{b^2}{(1+\alpha_m)^2n(n+2)}$. Since $X_{(n)}$ is the maximum likelihood estimator (MLE) of $b$, the MLE of optimal order quantity can also be obtained as 
\begin{equation}
\hat{Q}_3=\frac{X_{(n)}}{1+\alpha_m}
\label{unifT3}
\end{equation}
Note that $\hat{Q}_3$ is biased with  $Bias(\hat{Q}_3)=-Q^*\frac{1}{n+1}$ and mean square error $(MSE)=Q^{*2}\frac{2}{(n+1)(n+2)}$. Comparing $\hat{Q}_1,\; \hat{Q}_2$ and $\hat{Q}_3$ in terms of their variances (MSE for $\hat{Q}_2$), it can be easily seen that the UMVUE provides the best estimator among the three. In particular, $V(\hat{Q}_2)<MSE(\hat{Q}_3)<V(\hat{Q}_1)\; \forall b>0$. 
\subsection{Estimating Optimal Order Quantity for exp($\lambda$) Demand}
Let us now consider that the demand distribution to be exponential with unknown parameter $\lambda>0$. Let $X_1,X_2,\ldots,X_n$ be a random sample on demand. Thus the problem becomes estimation of optimal order quantity in this SyGen setup. 

\vspace{12pt}
\subsubsection{Estimating equation based on full sample}
In order to provide a good estimator of the optimal order quantity based on the full sample, we replace the parametric functions of $\lambda$ involved in the FOC eq. (\ref{FOCExp}) by their suitable estimators. In particular, we focus on replacing (i) $\lambda$ by its MLE, (ii) $e^{-\frac{Q}{\lambda}}=\bar{F}(Q)$ by corresponding UMVUEs. Performances of the estimated optimal order quantities, $Q^*$, is measured by the corresponding bias and MSE. 

\vspace{12pt}
\textbf{FIRST ESTIMATING EQUATION} \\

Replacing ${\lambda}$ in eq.~(\ref{FOCExp}) by its MLE ${\bar{X}}$, the first estimating equation is obtained as follows: 
\begin{equation}
\sum_{j=0}^{m-1}(-1)^j\left( \frac{Q}{\bar{X}}\right)^{m-j-1}{\frac{1}{{(m-j-1)!}}} =e^{-{\frac{Q}{\bar{X}}}}\gamma_{m}
\label{eeq1}
\end{equation}
Let the solution of this estimating equation be denoted by $Q^*_1$. Though $Q^*_1$ is a plug-in estimator, being a function of MLE, it still would be expected to perform well in terms of bias and MSE.

\vspace{12pt}
\textbf{SECOND ESTIMATING EQUATION}\\

The UMVUE of $ e^{-\frac{Q}{\lambda}}={\bar{F}}(Q)$ 
based on the SRS $X_1,X_2,...,X_n$ drawn from $exp(\lambda)$ population is given by
\[T_{SRS}=\left(1-{\frac{Q}{W}}\right)^{n-1}_{+}\]
where, $W={\sum_{i=1}^{n}}X_i=n{\bar{X}}$ and $(d)_{+}=max(d,0)$.
First we replace $e^{-\frac{Q}{\lambda}}$ by $T_{SRS}$ in eq(\ref{FOCExp}). Also, the UMVUE of $\lambda$, appearing in the coefficients of $Q^{m-j-1}, \; \forall 0\leq j \leq m-1$, in eq(\ref{FOCExp}), is given by $\bar{X}$. Replacing the above two estimators in place of their corresponding estimands in FOC, the second estimating equation is obtained as
\begin{equation}
{\sum_{j=0}^{m-1}}(-1)^j  \binom{n-1}{m-j-1}\left(\frac{Q}{W}\right)^{m-j-1} =\gamma_{m}\left(1-{\frac{Q}{W}}\right)^{n-1}_{+} 
\label{eeq2}
\end{equation}
Let the solution of the above estimating equation be denoted by $\hat{Q}^*_2$. This is also a plug-in estimator and we compare it with  $\hat{Q}_1^*$ in terms of bias and MSE, as provided in  section 4.

\vspace{12pt}
\subsubsection{Estimating equation based on order statistics}
In case the full data is not available, it is more likely that the seller would be able to recall the worst day or the best day in terms of demand. The worst day demand would be represented by the smallest order statistic $X_{(1)}$, and the best day would be counted as the largest order statistic $X_{(n)}$. In general, if the observation on the $i^{th}$ smallest demand out of a sample of size $n$ is available, \emph{viz.} $X_{(i)}$, then we may consider the following two possible ways of estimating the optimal order quantity. One is to use estimating equation obtained by replacing $\lambda$ with its unbiased estimator based on $X_{(i)}$ in both sides of the eq.~(\ref{FOCExp}) \citep[see][and references therein]{sengupta2006unbiased}. Another estimating equation can be obtained by replacing $\lambda$ by its unbiased estimator using $X_{(i)}$ in the left hand side expression of eq.~(\ref{FOCExp}) and $X_{(i)}$ based unbiased estimator of $\bar{F}(Q)$ in the right hand side of the same equation.

%For exponential distribution, it is well known that $nX_{(1)}$ is distributed as exp$({\lambda})$. Thus, estimating optimal order quantity with $X_{(1)}$ is equivalent to the same with a single sample. 
%So the second estimating equation can be obtained by replacing ${\lambda}$ in eq.(\ref{FOCExp}) by ${nX_{(1)}}$, given as
%\begin{equation}
%{\sum_{j=0}^{m-1}}(-1)^j\left({\frac{Q}{nX_{(1)}}}\right)^{m-j-1}{\frac{1}{{(m-j-1)!}}} =e^{-{\frac{Q}{nX_{(1)}}}} \gamma_m \label{eeq3}
%\end{equation}
%
%\begin{remark}
%	An immediate extension of the third estimation situation is the case when the $i^{th}$ smallest observation $X_{(i)}$ is the only data that is available. 
\citet{sinha2006} provided an unbiased estimator of ${\bar{F}}(Q)$ based on $X_{(i+1)}$, which is  as follows:
\begin{equation}
h_{i+1}(Z_{i+1})=\sum_{j_1=0}^\infty \sum_{j_2=0}^\infty\ldots \sum_{j_i=0}^\infty d_{j_1j_2\ldots j_i}I(Z_{i+1}>\alpha_1^{j_1}\alpha_2^{j_2} \ldots \alpha_i^{j_i}Q)
\end{equation}
where $\displaystyle Z_i=(n-i+1)X_{(i)}$,  $\alpha_k=\frac{n-i+k}{n-i}, \; k=1,2 \ldots i$, $\displaystyle d_{j_1j_2\ldots j_i}=\frac{(-1)^{\sum_1 j_i}}{\binom{n}{i}}\times$ $\displaystyle \prod_{k=1}^{i}\left[\frac{\binom{i}{k}}{\alpha_k}\right]^{j_k}$, $\sum_1$ extending over all even suffixes of $j$. 
Further, an unbiased estimator of $\lambda$ based on $X_{(i)}$ is given by $\hat{\lambda}=\frac{X_{(i)}}{a_i}$, where $a_i=\displaystyle \sum_{j=1}^i\frac{1}{n-j+1}$. The first approach to estimate $Q$ would be through the estimating equation obtained by replacing $\lambda$ with $\hat{\lambda}$ in eq.~\ref{FOCExp}. Let the corresponding solution be denoted by $\hat{Q}^*_{(1)}$. The second approach for estimating $Q^*$ is to replace $\bar{F}(Q)$ by $h_{i}(Z_{i})$ and $\lambda$ by $\hat{\lambda}$ in eq.~(\ref{FOCExp}). We denote the corresponding estimator by  $\hat{Q}^*_{(2)}$. 

In the case of only the best day demand data being available, the largest order statistic $X_{(n)}$ is observed. However, due to high complexity of computation we don't investigate this case in details.

%In the following section we present the simulation studies for exponential demand to gauge the performances of the estimators proposed in this section in terms of their biases and MSEs.
%  We consider the smallest order statistic i.e $X_{(1)}$ then the unbiased estimator of ${\bar{F}}(Q)$ is the indicator function $I[nX_{(1)}>Q]$. So we
%replace ${\bar{F}}(Q)$ by its unbiased estimator in equation $(3)$. We also know $nX_{(1)}$ is also distributed as exp(${\lambda}$). Let us write $Y=nX_{(1)}$.Then E(${\frac{1}{Y^r}}$)
%is defined only for $r<1$. Without loss of genrality we take $r=0.5$.
%\[E({\frac{1}{Y^{0.5}}})={\frac{1}{\lambda}}{\int_{0}^{\infty}}{\frac{1}{y^{0.5}}} e^{-{\frac{y}{\lambda}}} dy={\frac{1}{\lambda}}{\int_{0}^{\infty}} y^{0.5-1} e^{-{\frac{y}{\lambda}}} dy \]
%\[={\frac{{\lambda}^{0.5}{\Gamma}(0.5)}{\lambda}}={\frac{\sqrt{\pi}}{{\lambda}^{0.5}}}\]
%So ${\frac{1}{({\pi}Y)^{0.5}}}$ is an unbiased estimator of ${\frac{1}{{\lambda}^{0.5}}}$. We use this estimator in place of ${\lambda}^{-0.5}$ in the RHS of equation (3). 
%So the equation becomes
%\begin{equation}
%{\sum_{j=0}^{m-1}}(-1)^j({\frac{Q}{{\pi}Y})^{m-j-1}}{\frac{1}{{(m-j-1)!}}} =I[Y>Q] [{\frac{c_s}{c_l}}-(-1)^m]
%\end{equation}
\section{Simulation}
In this section we present simulation studies for exponential demand in order to estimate the optimum order quantities from the estimating equations discussed above. Let us consider  standard exponential demand distribution (exp($1$)) and observe performances of the estimated optimal order quantities corresponding to different estimating equations over different values of $m$ and ${\alpha_m}$. In this simulation we draw samples of size $n$, ($=10,50, 100,500, 1000,5000,10000$). For a given sample size $n$, the estimated optimal quantities are determined from each of the estimating equations described in the previous section. This procedure is repeated $1000$
times. We compute the bias in the estimated optimal order quantities by the average of ($\hat{Q}^*-Q^*$) over these $1000$ repetitions, where $Q^*$ is true optimal order quantity and $\hat{Q}^*$ is its estimate. The tables below report bias and MSE of different estimators of the optimal order quantity.

Full sample bias and MSE of $\hat{Q}_1^*$ are reported in tables \ref{q1starhat1}-\ref{q1starhat3}, respectively and the same for $\hat{Q}_2^*$ are given in tables \ref{q2starhat1}-\ref{q2starhat3}. It can be observed from the figures that the bias and MSEs of these two estimators of $Q^*$ are comparable. Also, none of the two estimators  uniformly outperforms the other in terms of absolute bias or MSE across the given degrees of importance of loss, $m$ , in either small or large sample cases.

Comparing the bias and MSE of the estimators of $Q^*$ based on $2^{nd}$ order statistic (\emph{viz.} $\hat{Q}_{(1)}^*$ and $\hat{Q}_{(2)}^*$) given in tables Tab.\ref{qordr1starhat1}-\ref{qordr1starhat3} and Tab.\ref{qordr2starhat1}-\ref{qordr2starhat3} respectively, similar observations as in the full sample case, can be made. However, in this case, the margin in bias and/or MSE for certain $(\alpha_1,m,n)$ are much larger. For example, MSE of $\hat{Q}_{(1)}^*$ is quite smaller than that of $\hat{Q}_{(2)}^*$ for $m=50$ over all the considered sample sizes, when $\alpha_1=2$, whereas similar margins could be observed favoring $\hat{Q}_{(2)}^*$ in case of $\alpha_1=1$ and $m=10$, for all sample sizes except 100 and 1000. Thus, neither of the two estimators outperform the other.  

\section{Conclusion}
Contributions of our study in this paper are 2-fold. First we have proposed a generalization of the standard news vendor problem assuming random demand and higher degree of shortage and excess loss. In particular, we have developed a symmetric generalized news vendor cost structure using power losses of same degree for both of shortage and excess inventory. We have presented the method to determine the optimal order quantity. In particular, we have presented the analytical expression of the optimal order quantity for uniform demand. In case of exponential demand,  determination of optimal order quantity requires finding zeroes of a transcendental equation. We have proven the existence of real roots of the equation, ensuring that there exists a realistic solution to the proposed general model. Our second contribution is to provide different estimators of the optimal order quantity. We have provided estimators of the optimal order quantity using (i) full sample and (ii) broken sample data on demand. Whereas, analytical form of the estimator and its properties are easy to verify for uniform demand distribution, it is difficult for exponential distribution. We have presented a simulation study to compare different estimators. In this paper, we have presented estimators based on full sample as well as broken sample like single order statistic. Finally, we have provided a simulation study to gauge the performance of the proposed estimators in terms of bias and MSE. 

A natural extension of this work could be to consider asymmetric shortage and excess losses. However, asymmetric power type shortage and excess losses would result in different dimensions of shortage and excess costs making them incomparable. Unpublished manuscript by \citet{baraiya2019} proposes an inventory model for such a case.

\section*{Acknowledgement} 
$3^{rd}$ author remains deeply indebted to Late Prof. S Sengupta and Prof. Bikas K Sinha for their intriguing discussions and guidance on broken sample estimation. Research of $1^{st}$ author is funded by Indian Institute of Management Indore. The authors acknowledge gratefully the valuable comments by the participants of the SMDTDS-2020 conference organized by IAPQR at Kolkata.

\bibliographystyle{plainnat}
\bibliography{EOQRef}

\begin{thebibliography}{24}
\providecommand{\natexlab}[1]{#1}
\providecommand{\url}[1]{\texttt{#1}}
\expandafter\ifx\csname urlstyle\endcsname\relax
  \providecommand{\doi}[1]{doi: #1}\else
  \providecommand{\doi}{doi: \begingroup \urlstyle{rm}\Url}\fi

\bibitem[Agrawal and Smith(1996)]{agrawal1996estimating}
Narendra Agrawal and Stephen~A Smith.
\newblock Estimating negative binomial demand for retail inventory management
  with unobservable lost sales.
\newblock \emph{Naval Research Logistics (NRL)}, 43\penalty0 (6):\penalty0
  839--861, 1996.

\bibitem[Baraiya and Mukhoti(2019)]{baraiya2019}
Rajendra Baraiya and Sujay Mukhoti.
\newblock Generalization of the newsvendor problem with gamma demand
  distribution by asymmetric losses.
\newblock Technical Report WP/03/2019-20/OM\&QT, Indian Institute of Management
  Indore, India, 2019.

\bibitem[Bertsimas and Thiele(2005)]{bertsimas2005data}
Dimitris Bertsimas and Aur{\'e}lie Thiele.
\newblock A data-driven approach to newsvendor problems.
\newblock \emph{Working Paper, Massachusetts Institute of Technology}, 2005.

\bibitem[Bookbinder and Lordahl(1989)]{bookbinder1998}
J.H. Bookbinder and A.E. Lordahl.
\newblock Estimation of inventory reorder level using the bootstrap statistical
  procedure.
\newblock \emph{IIE Trans.}, 21:\penalty0 302--312, 1989.

\bibitem[Chandra and Mukherjee(2005)]{chandra2005}
A.~Chandra and S.~P. Mukherjee.
\newblock Some alternative methods of finding the optimal order quantity in
  inventory models.
\newblock \emph{Calcutta Statistical Association Bulletin}, 57\penalty0
  (1-2):\penalty0 121, 2005.

\bibitem[Chernonog and Goldberg(2018)]{chernonog2018}
Tatyana Chernonog and Noam Goldberg.
\newblock On the multi-product newsvendor with bounded demand distributions.
\newblock \emph{International Journal of Production Economics}, 203:\penalty0
  38--47, 2018.

\bibitem[Conrad(1976)]{conrad1976}
SA~Conrad.
\newblock Sales data and the estimation of demand.
\newblock \emph{Journal of the Operational Research Society}, 27\penalty0
  (1):\penalty0 123--127, 1976.

\bibitem[Dvoretzky et~al.(1952)Dvoretzky, Kiefer, and Wolfowitz]{dvoretzky1952}
Aryeh Dvoretzky, J~Kiefer, and Jacob Wolfowitz.
\newblock The inventory problem: Ii. case of unknown distributions of demand.
\newblock \emph{Econometrica: Journal of the Econometric Society}, pages
  450--466, 1952.

\bibitem[Fukuda(1960)]{fukuda1960estimation}
Yoichiro Fukuda.
\newblock Estimation problems in inventory control.
\newblock Technical report, CALIFORNIA UNIV LOS ANGELES NUMERICAL ANALYSIS
  RESEARCH, 1960.

\bibitem[Gerchak and Wang(1997)]{gerchak1997}
Yigal Gerchak and Shaun Wang.
\newblock Liquid asset allocation using “newsvendor” models with convex
  shortage costs.
\newblock \emph{Insurance: Mathematics and Economics}, 20\penalty0
  (1):\penalty0 17--21, 1997.

\bibitem[Hayes(1969)]{hayes1969}
Robert~H Hayes.
\newblock Statistical estimation problems in inventory control.
\newblock \emph{Management Science}, 15\penalty0 (11):\penalty0 686--701, 1969.

\bibitem[Kleywegt et~al.(2001)Kleywegt, Shapiro, and
  Homem-De-Mello]{shapiro2001}
AJ~Kleywegt, A~Shapiro, and T~Homem-De-Mello.
\newblock The sample average approximation method for stochastic discrete
  optimization.
\newblock \emph{SIAM J. Optim.}, 12:\penalty0 479 -- 502, 2001.

\bibitem[Levi et~al.(2015)Levi, Perakis, and Uichanco]{levi2015data}
Retsef Levi, Georgia Perakis, and Joline Uichanco.
\newblock The data-driven newsvendor problem: new bounds and insights.
\newblock \emph{Operations Research}, 63\penalty0 (6):\penalty0 1294--1306,
  2015.

\bibitem[Nahmias(1994)]{nahmias1994demand}
Steven Nahmias.
\newblock Demand estimation in lost sales inventory systems.
\newblock \emph{Naval Research Logistics (NRL)}, 41\penalty0 (6):\penalty0
  739--757, 1994.

\bibitem[Pal(1996)]{pal1996}
Manisha Pal.
\newblock Asymptotic confidence intervals for the optimal cost in newsboy
  problem.
\newblock \emph{Calcutta Statistical Association Bulletin}, 46\penalty0
  (3-4):\penalty0 245--252, 1996.

\bibitem[Parlar and Rempala(1992)]{parlar1992}
Mahmut Parlar and Ryszarda Rempala.
\newblock A stochastic inventory problem with piecewise quadratic costs.
\newblock \emph{International Journal of Production Economics}, 26\penalty0
  (1-3):\penalty0 327--332, 1992.

\bibitem[Qin et~al.(2011)Qin, Wang, Vakharia, Chen, and Seref]{qin2011}
Yan Qin, Ruoxuan Wang, Asoo~J Vakharia, Yuwen Chen, and Michelle~MH Seref.
\newblock The newsvendor problem: Review and directions for future research.
\newblock \emph{European Journal of Operational Research}, 213\penalty0
  (2):\penalty0 361--374, 2011.

\bibitem[Rossi et~al.(2014)Rossi, Prestwich, Tarim, and
  Hnich]{rossi2014confidence}
Roberto Rossi, Steven Prestwich, S~Armagan Tarim, and Brahim Hnich.
\newblock Confidence-based optimisation for the newsvendor problem under
  binomial, poisson and exponential demand.
\newblock \emph{European Journal of Operational Research}, 239\penalty0
  (3):\penalty0 674--684, 2014.

\bibitem[Scarf(1959)]{scarf1959}
Herbert Scarf.
\newblock The optimality of (5, 5) policies in the dynamic inventory problem.
\newblock 1959.

\bibitem[Sengupta and Mukhuti(2006)]{sengupta2006unbiased}
S~Sengupta and Sujay Mukhuti.
\newblock Unbiased variance estimation in a simple exponential population using
  ranked set samples.
\newblock \emph{Journal of statistical planning and inference}, 136\penalty0
  (4):\penalty0 1526--1553, 2006.

\bibitem[Silver et~al.(1998)Silver, Pyke, Peterson, et~al.]{silver1998}
Edward~Allen Silver, David~F Pyke, Rein Peterson, et~al.
\newblock \emph{Inventory management and production planning and scheduling},
  volume~3.
\newblock Wiley New York, 1998.

\bibitem[Sinha et~al.(2006)Sinha, Sengupta, and Mukhuti]{sinha2006}
Bikas~K. Sinha, Samindranath Sengupta, and Sujay Mukhuti.
\newblock Unbiased estimation of the distribution function of an exponential
  population using order statistics with application in ranked set sampling.
\newblock \emph{Communications in Statistics - Theory and Methods}, 35\penalty0
  (9):\penalty0 1655--1670, 2006.
\newblock \doi{10.1080/03610920600683663}.

\bibitem[Sok(1980)]{sok1980}
Yong-u Sok.
\newblock A study of alternative quantile estimation methods in newsboy-type
  problems.
\newblock Technical report, NAVAL POSTGRADUATE SCHOOL MONTEREY CA, 1980.

\bibitem[Veinott~Jr(1965)]{veinott1965}
Arthur~F Veinott~Jr.
\newblock Optimal policy for a multi-product, dynamic, nonstationary inventory
  problem.
\newblock \emph{Management Science}, 12\penalty0 (3):\penalty0 206--222, 1965.

\end{thebibliography}

\newpage
\setcounter{section}{0}
\renewcommand{\thesection}{\Roman{section}}
\appendix

\section{Tables}
\begin{table}[h!]
	\renewcommand{\thetable}{\arabic{table}a}
	\centering
	\caption{Bias (in $1^{st}$ row) and MSE (in $2^{nd}$ row) of $\hat{Q}^*_{1}$ for $C_s=2C_e$}
	\begin{tabular}{rrrrrrrr}
		\hline
		m & $n=$10 & $n=$50 & $n=$100 & $n=$500 & $n=$1000 & $n=$5000 & $n=$10000 \\ 
		\hline
		2 & 0.0052 & -0.0070 & -0.0010 & 0.0011 & -0.0007 & 0.0004 & -0.0006 \\ 
		& 0.0597 & 0.0113 & 0.0060 & 0.0012 & 0.0005 & 0.0001 & 0.0001 \\ 
		3 & 0.6206 & 0.5937 & 0.6070 & 0.6116 & 0.6076 & 0.6099 & 0.6079 \\ 
		& 0.6775 & 0.4074 & 0.3976 & 0.3800 & 0.3718 & 0.3726 & 0.3698 \\ 
		4 & 13.1606 & 12.9312 & 13.0446 & 13.0836 & 13.0495 & 13.0695 & 13.0523 \\ 
		& 194.3676 & 171.1996 & 172.2751 & 171.6106 & 170.4826 & 170.8554 & 170.3837 \\ 
		5 & -0.0927 & -0.1180 & -0.1055 & -0.1012 & -0.1050 & -0.1028 & -0.1047 \\ 
		& 0.2664 & 0.0624 & 0.0369 & 0.0155 & 0.0134 & 0.0111 & 0.0112 \\ 
		10 & 3.0587 & 2.9607 & 3.0092 & 3.0258 & 3.0112 & 3.0198 & 3.0124 \\ 
		& 13.2180 & 9.4926 & 9.4406 & 9.2340 & 9.1028 & 9.1270 & 9.0786 \\ 
		20 & -4.4660 & -4.4904 & -4.4784 & -4.4742 & -4.4778 & -4.4757 & -4.4776 \\ 
		& 20.1848 & 20.2090 & 20.0797 & 20.0235 & 20.0533 & 20.0325 & 20.0487 \\ 
		50 & -10.9240 & -10.9797 & -10.9522 & -10.9427 & -10.9510 & -10.9461 & -10.9503 \\ 
		& 120.5828 & 120.7890 & 120.0743 & 119.7677 & 119.9350 & 119.8197 & 119.9100 \\ 
		\hline
	\end{tabular}
	\label{q1starhat1}
\end{table}

\begin{table}
	\addtocounter{table}{-1}
	\renewcommand{\thetable}{\arabic{table}b}
	\centering
	\caption{Bias (in $1^{st}$ row) and MSE (in $2^{nd}$ row) of $\hat{Q}^*_{1}$ for $C_s = C_e$}
	\begin{tabular}{rrrrrrrr}
		\hline
		m & $n=$10 & $n=$50 & $n=$100 & $n=$500 & $n=$1000 & $n=$5000 & $n=$10000 \\ 
		\hline
		2 & 0.0978 & 0.0805 & 0.0891 & 0.0920 & 0.0895 & 0.0910 & 0.0897 \\ 
		& 0.1299 & 0.0291 & 0.0200 & 0.0109 & 0.0091 & 0.0085 & 0.0082 \\ 
		3 & 1.8820 & 1.8318 & 1.8566 & 1.8651 & 1.8577 & 1.8621 & 1.8583 \\ 
		& 4.5529 & 3.5459 & 3.5480 & 3.4993 & 3.4603 & 3.4694 & 3.4543 \\ 
		4 & -0.5893 & -0.6052 & -0.5973 & -0.5946 & -0.5970 & -0.5956 & -0.5968 \\ 
		& 0.4485 & 0.3853 & 0.3669 & 0.3556 & 0.3573 & 0.3549 & 0.3563 \\ 
		5 & 0.0128 & -0.0172 & -0.0024 & 0.0027 & -0.0017 & 0.0009 & -0.0014 \\ 
		& 0.3614 & 0.0683 & 0.0361 & 0.0074 & 0.0033 & 0.0007 & 0.0004 \\ 
		10 & 11.3833 & 11.1515 & 11.2661 & 11.3055 & 11.2711 & 11.2913 & 11.2739 \\ 
		& 151.2083 & 128.4259 & 129.0847 & 128.2545 & 127.2342 & 127.5375 & 127.1223 \\ 
		20 & -4.3518 & -4.3806 & -4.3664 & -4.3615 & -4.3657 & -4.3632 & -4.3654 \\ 
		& 19.2711 & 19.2521 & 19.0983 & 19.0292 & 19.0627 & 19.0385 & 19.0570 \\ 
		50 & -8.2308 & -8.3314 & -8.2817 & -8.2646 & -8.2795 & -8.2708 & -8.2783 \\ 
		& 71.8199 & 70.1792 & 68.9928 & 68.3861 & 68.5878 & 68.4136 & 68.5345 \\ 
		\hline
	\end{tabular}
	\label{q1starhat2}
\end{table}

\begin{table}
	\addtocounter{table}{-1}
	\renewcommand{\thetable}{\arabic{table}c}
	\centering
	\caption{Bias (in $1^{st}$ row) and MSE (in $2^{nd}$ row) of $\hat{Q}^*_{1}$ for $C_s =0.5 C_e$}
	\begin{tabular}{rrrrrrrr}
		\hline
		$m$ & $n=$10 & $n=$50 & $n=$100 & $n=$500 & $n=$1000 & $n=$5000 & $n=$10000 \\ 
		\hline
		2 & 0.1296 & 0.1074 & 0.1184 & 0.1222 & 0.1189 & 0.1208 & 0.1191 \\ 
		& 0.2148 & 0.0488 & 0.0338 & 0.0189 & 0.0159 & 0.0150 & 0.0144 \\ 
		3 & 4.5171 & 4.4217 & 4.4689 & 4.4851 & 4.4709 & 4.4793 & 4.4721 \\ 
		& 24.0654 & 20.2408 & 20.3366 & 20.1907 & 20.0227 & 20.0712 & 20.0034 \\ 
		4 & -0.5046 & -0.5252 & -0.5150 & -0.5115 & -0.5146 & -0.5128 & -0.5143 \\ 
		& 0.4257 & 0.3080 & 0.2823 & 0.2651 & 0.2664 & 0.2633 & 0.2647 \\ 
		5 & 1.2617 & 1.2088 & 1.2349 & 1.2439 & 1.2361 & 1.2407 & 1.2367 \\ 
		& 2.7166 & 1.6729 & 1.6374 & 1.5702 & 1.5381 & 1.5416 & 1.5306 \\ 
		10 & -2.2265 & -2.2468 & -2.2367 & -2.2333 & -2.2363 & -2.2345 & -2.2361 \\ 
		& 5.1226 & 5.0790 & 5.0195 & 4.9909 & 5.0025 & 4.9934 & 5.0001 \\ 
		20 & -4.2357 & -4.2689 & -4.2525 & -4.2468 & -4.2517 & -4.2488 & -4.2513 \\ 
		& 18.3848 & 18.3069 & 18.1277 & 18.0444 & 18.0814 & 18.0536 & 18.0743 \\ 
		50 & 0.1000 & -0.1344 & -0.0185 & 0.0213 & -0.0135 & 0.0069 & -0.0107 \\ 
		& 22.1078 & 4.1763 & 2.2061 & 0.4496 & 0.2016 & 0.0449 & 0.0223 \\ 
		\hline
	\end{tabular}
	\label{q1starhat3}
\end{table}
%\textbf{[Put Tables 4a-4c for SRS UMVUE]}
\begin{table}
	%	\addtocounter{table}{-1}
	\renewcommand{\thetable}{\arabic{table}a}
	\centering
	\caption{Bias (in $1^{st}$ row) and MSE (in $2^{nd}$ row) of $\hat{Q}^*_{2}$ for $C_e =2 C_s$\vspace{12pt}}
	\begin{tabular}{llllllll}
		\hline
		m & $n=$10 & $n=$50 & $n=$100 & $n=$500 & $n=$1000 & $n=$5000 & $n=$10000 \\ 
		\hline		
		2 & -0.0040 & -0.0062 & 0.0024 & 0.0000 & -0.0003 & -0.0001 & -0.0002 \\ 
		& 0.0566 & 0.0110 & 0.0060 & 0.0012 & 0.0006 & 0.0001 & 0.0001 \\ 
		3 & 0.6517 & 0.6058 & 0.6197 & 0.6102 & 0.6091 & 0.6090 & 0.6087 \\ 
		& 0.7187 & 0.4211 & 0.4138 & 0.3785 & 0.3740 & 0.3715 & 0.3709 \\ 
		4 & 13.4388 & 13.1458 & 13.2482 & 13.1041 & 13.0796 & 13.0654 & 13.0613 \\ 
		& 201.9324 & 176.7958 & 177.6949 & 172.1608 & 171.2964 & 170.7480 & 170.6197 \\ 
		5 & -0.0795 & -0.1103 & -0.0954 & -0.1028 & -0.1037 & -0.1036 & -0.1039 \\ 
		& 0.2606 & 0.0597 & 0.0352 & 0.0160 & 0.0134 & 0.0113 & 0.0111 \\ 
		10 & 3.1096 & 2.9906 & 3.0482 & 3.0195 & 3.0161 & 3.0164 & 3.0154 \\ 
		& 13.4791 & 9.6562 & 9.6832 & 9.1984 & 9.1368 & 9.1061 & 9.0969 \\ 
		20 & -4.4347 & -4.4792 & -4.4667 & -4.4754 & -4.4765 & -4.4765 & -4.4768 \\ 
		& 19.9084 & 20.1073 & 19.9757 & 20.0342 & 20.0411 & 20.0399 & 20.0419 \\ 
		50 & -10.9395 & -10.9700 & -10.9336 & -10.9470 & -10.9486 & -10.9481 & -10.9486 \\ 
		& 120.8739 & 120.5714 & 119.6690 & 119.8620 & 119.8843 & 119.8641 & 119.8738 \\ 
		\hline
	\end{tabular}
	\label{q2starhat1}
\end{table}

\begin{table}
	\addtocounter{table}{-1}
	\renewcommand{\thetable}{\arabic{table}b}
	\centering
	\caption{Bias (in $1^{st}$ row) and MSE (in $2^{nd}$ row) of $\hat{Q}^*_{2}$ for $C_s =2 C_e$\vspace{12pt}}
	\begin{tabular}{llllllll}
		\hline
		m & $n=$10 & $n=$50 & $n=$100 & $n=$500 & $n=$1000 & $n=$5000 & $n=$10000 \\ 
		\hline		
		2 & 0.1491 & 0.0942 & 0.1002 & 0.0918 & 0.0907 & 0.0904 & 0.0902 \\ 
		& 0.1502 & 0.0314 & 0.0223 & 0.0109 & 0.0095 & 0.0084 & 0.0083 \\ 
		3 & 1.9619 & 1.8564 & 1.8812 & 1.8628 & 1.8606 & 1.8604 & 1.8599 \\ 
		& 4.8798 & 3.6337 & 3.6416 & 3.4912 & 3.4723 & 3.4630 & 3.4602 \\ 
		4 & -0.5811 & -0.6003 & -0.5910 & -0.5956 & -0.5962 & -0.5962 & -0.5963 \\ 
		& 0.4374 & 0.3791 & 0.3595 & 0.3569 & 0.3565 & 0.3556 & 0.3557 \\ 
		5 & 0.0309 & -0.0070 & 0.0101 & 0.0009 & -0.0002 & -0.0002 & -0.0004 \\ 
		& 0.3582 & 0.0668 & 0.0368 & 0.0076 & 0.0038 & 0.0007 & 0.0004 \\ 
		10 & 11.5039 & 11.2222 & 11.3586 & 11.2907 & 11.2825 & 11.2832 & 11.2810 \\ 
		& 153.6697 & 129.9282 & 131.2102 & 127.9323 & 127.5194 & 127.3534 & 127.2837 \\ 
		20 & -4.3348 & -4.3713 & -4.3546 & -4.3633 & -4.3643 & -4.3642 & -4.3645 \\ 
		& 19.1194 & 19.1696 & 18.9964 & 19.0449 & 19.0506 & 19.0472 & 19.0494 \\ 
		50 & -8.3287 & -8.3343 & -8.2585 & -8.2744 & -8.2763 & -8.2746 & -8.2754 \\ 
		& 73.1996 & 70.2044 & 68.6129 & 68.5503 & 68.5387 & 68.4776 & 68.4869 \\ 
		\hline
	\end{tabular}
	\label{q2starhat2}
\end{table}

\begin{table}
	\addtocounter{table}{-1}
	\renewcommand{\thetable}{\arabic{table}c}
	\centering
	\caption{Bias (in $1^{st}$ row) and MSE (in $2^{nd}$ row) of $\hat{Q}^*_{2}$ for $C_s =2 C_e$\vspace{12pt}}
	\begin{tabular}{llllllll}
		\hline
		m & $n=$10 & $n=$50 & $n=$100 & $n=$500 & $n=$1000 & $n=$5000 & $n=$10000 \\ 
		\hline		
		2 & 0.1288 & 0.1119 & 0.1261 & 0.1205 & 0.1198 & 0.1200 & 0.1198 \\ 
		& 0.2084 & 0.0489 & 0.0359 & 0.0187 & 0.0164 & 0.0148 & 0.0146 \\ 
		3 & 4.7220 & 4.5005 & 4.5335 & 4.4846 & 4.4784 & 4.4765 & 4.4753 \\ 
		& 26.0939 & 20.9409 & 20.9274 & 20.1881 & 20.0944 & 20.0462 & 20.0321 \\ 
		4 & -0.4644 & -0.5131 & -0.5038 & -0.5122 & -0.5132 & -0.5134 & -0.5137 \\ 
		& 0.3920 & 0.2951 & 0.2713 & 0.2660 & 0.2652 & 0.2639 & 0.2640 \\ 
		5 & 1.2892 & 1.2249 & 1.2560 & 1.2405 & 1.2387 & 1.2388 & 1.2383 \\ 
		& 2.7712 & 1.7080 & 1.6916 & 1.5625 & 1.5460 & 1.5369 & 1.5347 \\ 
		10 & -2.2054 & -2.2386 & -2.2276 & -2.2344 & -2.2352 & -2.2352 & -2.2354 \\ 
		& 5.0296 & 5.0418 & 4.9791 & 4.9959 & 4.9978 & 4.9965 & 4.9973 \\ 
		20 & -4.2275 & -4.2598 & -4.2397 & -4.2490 & -4.2502 & -4.2500 & -4.2503 \\ 
		& 18.3061 & 18.2278 & 18.0200 & 18.0635 & 18.0684 & 18.0636 & 18.0658 \\ 
		50 & 0.0614 & -0.1994 & -0.0146 & -0.0146 & -0.0126 & -0.0034 & -0.0046 \\ 
		& 21.3335 & 4.0419 & 2.2135 & 0.4609 & 0.2292 & 0.0443 & 0.0239 \\ 
		\hline
	\end{tabular}
	\label{q2starhat3}
\end{table}
%\textbf{[Put Tables 5a-5c for X(2) unbiased: $Q^*_{(1)}$]}
\begin{table}
	%	\addtocounter{table}{-1}
	\renewcommand{\thetable}{\arabic{table}a}
	\centering
	\caption{Bias (in $1^{st}$ row) and MSE (in $2^{nd}$ row) of $\hat{Q}^*_{(1)}$ for $C_s =2 C_e$\vspace{12pt}}
	\begin{tabular}{llllllll}
		m & $n=$10 & $n=$50 & $n=$100 & $n=$500 & $n=$1000 & $n=$5000 & $n=$10000 \\ 
		\hline
		2 & 0.0311 & 0.0057 & 0.0044 & 0.0118 & 0.0130 & 0.0042 & 0.0257 \\ 
		& 0.3344 & 0.2931 & 0.2813 & 0.3175 & 0.2928 & 0.3210 & 0.3657 \\ 
		3 & 0.6780 & 0.6219 & 0.6189 & 0.6353 & 0.6379 & 0.6185 & 0.6660 \\ 
		& 2.0924 & 1.8219 & 1.7604 & 1.9575 & 1.8396 & 1.9542 & 2.2309 \\ 
		4 & 13.6486 & 13.1712 & 13.1460 & 13.2854 & 13.3079 & 13.1422 & 13.5463 \\ 
		& 304.4965 & 277.3833 & 272.5336 & 289.0052 & 280.8245 & 286.5041 & 312.9112 \\ 
		5 & -0.0388 & -0.0915 & -0.0943 & -0.0789 & -0.0765 & -0.0947 & -0.0501 \\ 
		& 1.4415 & 1.2740 & 1.2235 & 1.3767 & 1.2693 & 1.3950 & 1.5789 \\ 
		10 & 3.2672 & 3.0633 & 3.0525 & 3.1120 & 3.1216 & 3.0508 & 3.2235 \\ 
		& 32.2458 & 28.3435 & 27.5137 & 30.2142 & 28.6722 & 30.0715 & 34.0053 \\ 
		20 & -4.4141 & -4.4649 & -4.4676 & -4.4528 & -4.4504 & -4.4680 & -4.4250 \\ 
		& 20.8213 & 21.1102 & 21.0868 & 21.0992 & 20.9786 & 21.2496 & 21.0440 \\ 
		50 & -10.8054 & -10.9214 & -10.9275 & -10.8936 & -10.8882 & -10.9285 & -10.8303 \\ 
		& 123.7353 & 125.4102 & 125.2968 & 125.3128 & 124.6758 & 126.1481 & 124.9338 \\ 
		\hline
	\end{tabular}
	\label{qordr1starhat1}
\end{table}

\begin{table}
	\addtocounter{table}{-1}
	\renewcommand{\thetable}{\arabic{table}b}
	\centering
	\caption{Bias (in $1^{st}$ row) and MSE (in $2^{nd}$ row) of $\hat{Q}^*_{(1)}$ for $C_s = C_e$\vspace{12pt}}
	\begin{tabular}{llllllll}
		m & $n=$10 & $n=$50 & $n=$100 & $n=$500 & $n=$1000 & $n=$5000 & $n=$10000 \\ 
		\hline
		2 & 0.1346 & 0.0986 & 0.0967 & 0.1073 & 0.1089 & 0.0965 & 0.1269 \\ 
		& 0.6903 & 0.6005 & 0.5763 & 0.6512 & 0.6016 & 0.6563 & 0.7519 \\ 
		3 & 1.9886 & 1.8843 & 1.8788 & 1.9092 & 1.9141 & 1.8779 & 1.9663 \\ 
		& 9.6012 & 8.5136 & 8.2929 & 9.0191 & 8.6186 & 8.9619 & 10.0476 \\ 
		4 & -0.5556 & -0.5886 & -0.5903 & -0.5807 & -0.5791 & -0.5906 & -0.5626 \\ 
		& 0.8739 & 0.8432 & 0.8253 & 0.8751 & 0.8314 & 0.8929 & 0.9354 \\ 
		5 & 0.0765 & 0.0142 & 0.0109 & 0.0291 & 0.0320 & 0.0104 & 0.0632 \\ 
		& 2.0236 & 1.7737 & 1.7021 & 1.9211 & 1.7715 & 1.9423 & 2.2128 \\ 
		10 & 11.8767 & 11.3942 & 11.3687 & 11.5096 & 11.5323 & 11.3648 & 11.7733 \\ 
		& 261.8440 & 235.9926 & 231.1351 & 247.4247 & 238.9786 & 245.4252 & 270.8390 \\ 
		20 & -4.2906 & -4.3505 & -4.3536 & -4.3362 & -4.3333 & -4.3541 & -4.3034 \\ 
		& 20.2683 & 20.5605 & 20.5222 & 20.5714 & 20.4089 & 20.7477 & 20.5546 \\ 
		50 & -8.0167 & -8.2261 & -8.2372 & -8.1760 & -8.1662 & -8.2389 & -8.0616 \\ 
		& 87.0193 & 87.6663 & 87.0425 & 88.5002 & 86.6497 & 89.7789 & 89.8955 \\ 
		\hline
	\end{tabular}
	\label{qordr1starhat2}
\end{table}
\begin{table}
	\addtocounter{table}{-1}
	\renewcommand{\thetable}{\arabic{table}c}
	\centering
	\caption{Bias (in $1^{st}$ row) and MSE (in $2^{nd}$ row) of $\hat{Q}^*_{(1)}$ for $C_s =0.5 C_e$\vspace{12pt}}
	\begin{tabular}{llllllll}
		m & $n=$10 & $n=$50 & $n=$100 & $n=$500 & $n=$1000 & $n=$5000 & $n=$10000 \\ 
		\hline
		2 & 0.1768 & 0.1306 & 0.1282 & 0.1417 & 0.1439 & 0.1278 & 0.1669 \\ 
		& 1.1370 & 0.9889 & 0.9491 & 1.0724 & 0.9909 & 1.0807 & 1.2383 \\ 
		3 & 4.7201 & 4.5216 & 4.5111 & 4.5691 & 4.5784 & 4.5095 & 4.6776 \\ 
		& 42.7250 & 38.4152 & 37.5964 & 40.3345 & 38.9017 & 40.0158 & 44.2617 \\ 
		4 & -0.4607 & -0.5036 & -0.5059 & -0.4933 & -0.4913 & -0.5062 & -0.4699 \\ 
		& 1.1676 & 1.0934 & 1.0618 & 1.1526 & 1.0797 & 1.1759 & 1.2667 \\ 
		5 & 1.3742 & 1.2641 & 1.2583 & 1.2905 & 1.2956 & 1.2574 & 1.3506 \\ 
		& 8.1699 & 7.1191 & 6.8820 & 7.6434 & 7.1903 & 7.6275 & 8.7005 \\ 
		10 & -2.1833 & -2.2255 & -2.2278 & -2.2154 & -2.2135 & -2.2281 & -2.1924 \\ 
		& 5.6909 & 5.7651 & 5.7423 & 5.7875 & 5.7101 & 5.8538 & 5.8179 \\ 
		20 & -4.1650 & -4.2341 & -4.2378 & -4.2176 & -4.2143 & -4.2383 & -4.1798 \\ 
		& 19.8266 & 20.1071 & 20.0502 & 20.1477 & 19.9362 & 20.3501 & 20.1851 \\ 
		50 & 0.5986 & 0.1109 & 0.0852 & 0.2276 & 0.2505 & 0.0812 & 0.4941 \\ 
		& 123.7725 & 108.4855 & 104.1096 & 117.5048 & 108.3515 & 118.8006 & 135.3470 \\ 
		\hline
	\end{tabular}
	\label{qordr1starhat3}
\end{table}

%\textbf{[Put tables 6a-6c here]}

\begin{table}
	%	\addtocounter{table}{-1}
	\renewcommand{\thetable}{\arabic{table}a}
	\centering
	\caption{Bias (in $1^{st}$ row) and MSE (in $2^{nd}$ row) of $\hat{Q}^*_{(2)}$ for $C_s =2 C_e$\vspace{12pt}}
	\begin{tabular}{llllllll}
		\hline
		m & $n=$10 & $n=$50 & $n=$100 & $n=$500 & $n=$1000 & $n=$5000 & $n=$10000 \\ 
		\hline
		2 & -0.0625 & -0.1219 & -0.0799 & -0.1007 & -0.1041 & -0.1025 & -0.1056 \\ 
		& 0.2761 & 0.2108 & 0.2245 & 0.2384 & 0.2546 & 0.2371 & 0.2576 \\ 
		3 & 0.5977 & 0.4855 & 0.5937 & 0.5472 & 0.5395 & 0.5451 & 0.5376 \\ 
		& 1.9156 & 1.4015 & 1.6598 & 1.6739 & 1.7599 & 1.6658 & 1.7779 \\ 
		4 & 13.2132 & 12.6096 & 13.6049 & 13.2154 & 13.1489 & 13.1931 & 13.1255 \\ 
		& 291.3362 & 251.0958 & 288.8477 & 284.1045 & 289.9123 & 282.9913 & 290.7649 \\ 
		5 & -0.1042 & -0.1701 & -0.0614 & -0.1039 & -0.1112 & -0.1064 & -0.1138 \\ 
		& 1.4026 & 1.1268 & 1.2407 & 1.3157 & 1.4074 & 1.3100 & 1.4255 \\ 
		10 & 3.0143 & 2.7592 & 3.1798 & 3.0152 & 2.9871 & 3.0058 & 2.9772 \\ 
		& 29.9356 & 24.0600 & 28.6405 & 28.6393 & 29.8210 & 28.4890 & 30.0241 \\ 
		20 & -4.5136 & -4.5327 & -4.4134 & -4.4535 & -4.4608 & -4.4556 & -4.4627 \\ 
		& 21.6035 & 21.5750 & 20.6595 & 21.0824 & 21.2326 & 21.0951 & 21.2678 \\ 
		50 & -11.1292 & -11.2668 & -11.0399 & -11.1287 & -11.1439 & -11.1338 & -11.1492 \\ 
		& 129.9304 & 131.7310 & 127.2745 & 129.5401 & 130.2713 & 129.6265 & 130.4667 \\ \hline 
	\end{tabular}
	\label{qordr2starhat1}
\end{table}

\begin{table}
	\addtocounter{table}{-1}
	\renewcommand{\thetable}{\arabic{table}b}
	\centering
	\caption{Bias (in $1^{st}$ row) and MSE (in $2^{nd}$ row) of $\hat{Q}^*_{(2)}$ for $C_s = C_e$\vspace{12pt}}
	\begin{tabular}{llllllll}
		\hline
		m & $n=$10 & $n=$50 & $n=$100 & $n=$500 & $n=$1000 & $n=$5000 & $n=$10000 \\ 
		\hline
		2 & 0.1857 & 0.2157 & 0.3067 & 0.2795 & 0.2740 & 0.2783 & 0.2725 \\ 
		& 0.8032 & 0.7401 & 0.8810 & 0.9173 & 0.9725 & 0.9135 & 0.9837 \\ 
		3 & 2.0321 & 1.8945 & 2.1214 & 2.0326 & 2.0174 & 2.0275 & 2.0121 \\ 
		& 10.2007 & 8.3781 & 9.8960 & 9.8237 & 10.1555 & 9.7758 & 10.2103 \\ 
		4 & -0.5965 & -0.6378 & -0.5697 & -0.5963 & -0.6009 & -0.5979 & -0.6025 \\ 
		& 0.9022 & 0.8377 & 0.8101 & 0.8679 & 0.9087 & 0.8672 & 0.9175 \\ 
		5 & -0.2011 & -0.2469 & -0.1175 & -0.1674 & -0.1741 & -0.1685 & -0.1766 \\ 
		& 1.5988 & 1.3269 & 1.4608 & 1.5476 & 1.6572 & 1.5434 & 1.6790 \\ 
		10 & 11.2783 & 10.6747 & 11.6700 & 11.2805 & 11.2140 & 11.2581 & 11.1905 \\ 
		& 243.9466 & 206.0419 & 239.9422 & 236.7064 & 242.7717 & 235.6797 & 243.7150 \\ 
		20 & -4.4437 & -4.4980 & -4.3694 & -4.4175 & -4.4258 & -4.4206 & -4.4287 \\ 
		& 21.3903 & 21.5558 & 20.5983 & 21.1017 & 21.2846 & 21.1207 & 21.3313 \\ 
		50 & -8.4431 & -8.6981 & -8.2775 & -8.4421 & -8.4703 & -8.4516 & -8.4802 \\ 
		& 92.1351 & 92.1044 & 87.0470 & 90.8175 & 92.6434 & 90.8839 & 93.0735 \\ 
		\hline
	\end{tabular}
	\label{qordr2starhat2}
\end{table}

\begin{table}[h]
	\addtocounter{table}{-1}
	\renewcommand{\thetable}{\arabic{table}c}
	\centering
	\caption{Bias (in $1^{st}$ row) and MSE (in $2^{nd}$ row) of $\hat{Q}^*_{(2)}$ for $C_s = 0.5 C_e$\vspace{12pt}}
	\begin{tabular}{llllllll}
		\hline
		$m$ & $n=$10 & $n=$50 & $n=$100 & $n=$500 & $n=$1000 & $n=$5000 & $n=$10000 \\ 
		\hline
		2 & 0.0783 & -0.0096 & 0.0751 & 0.0353 & 0.0286 & 0.0322 & 0.0261 \\ 
		& 1.0075 & 0.7557 & 0.8501 & 0.8859 & 0.9455 & 0.8799 & 0.9566 \\ 
		3 & 4.6372 & 4.3821 & 4.8027 & 4.6381 & 4.6100 & 4.6286 & 4.6001 \\ 
		& 42.3529 & 35.6494 & 41.5951 & 41.0597 & 42.1501 & 40.8788 & 42.3211 \\ 
		4 & -0.4795 & -0.4729 & -0.3671 & -0.4008 & -0.4066 & -0.4019 & -0.4084 \\ 
		& 1.2028 & 1.0670 & 1.0989 & 1.1834 & 1.2596 & 1.1810 & 1.2757 \\ 
		5 & 1.2377 & 1.1001 & 1.3270 & 1.2382 & 1.2230 & 1.2331 & 1.2177 \\ 
		& 7.6032 & 5.9993 & 7.1566 & 7.2254 & 7.5813 & 7.1856 & 7.6446 \\ 
		10 & -2.1798 & -2.2379 & -2.1481 & -2.1815 & -2.1869 & -2.1827 & -2.1889 \\ 
		& 5.7243 & 5.7721 & 5.4737 & 5.6682 & 5.7558 & 5.6704 & 5.7768 \\ 
		20 & -4.4662 & -4.6036 & -4.4881 & -4.5374 & -4.5444 & -4.5388 & -4.5471 \\ 
		& 21.8757 & 22.6154 & 21.7307 & 22.2611 & 22.4430 & 22.2681 & 22.4902 \\ 
		50 & -0.1642 & -0.7678 & 0.2275 & -0.1620 & -0.2285 & -0.1843 & -0.2519 \\ 
		& 116.7738 & 92.6822 & 103.8056 & 109.4833 & 117.0710 & 108.9685 & 118.5504 \\ 
		\hline	
	\end{tabular}	
	\label{qordr2starhat3}
\end{table}

\clearpage
\renewcommand{\thesection}{\Roman{section}}
\section{Figures}
\begin{figure}[htp]
	\includegraphics[scale=0.45]{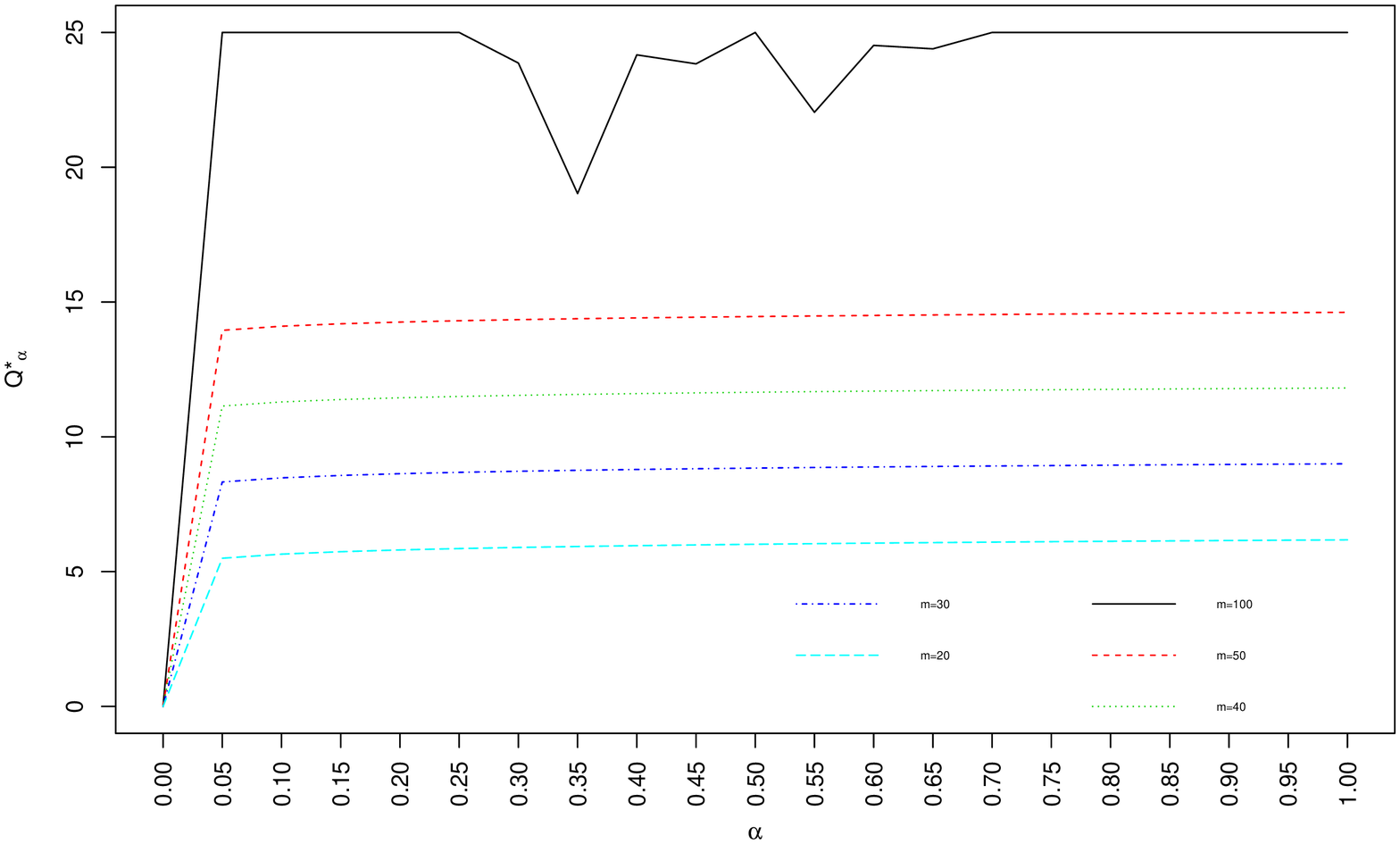}\\
	\includegraphics[scale=0.45]{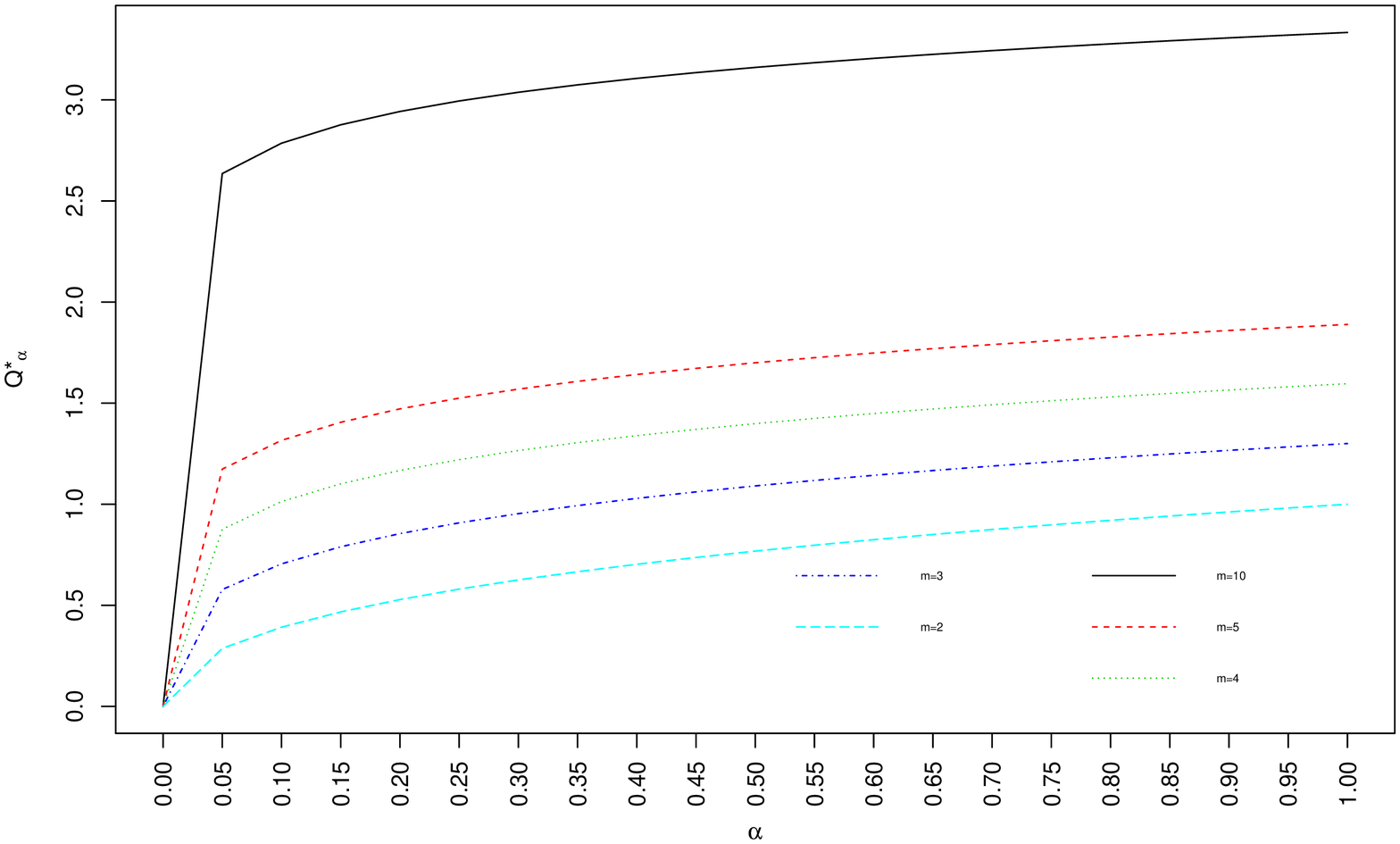}
	\caption{Optimal Order Quantity ($Q^*$) for different $\alpha_1\in(0,1)$ and $m$.  }
	\label{qstar}
\end{figure}
\end{document}